\documentclass[11pt]{article}

\usepackage{amsmath, amssymb, amsthm}
\usepackage{mathrsfs}
\usepackage{graphicx}
\usepackage{hyperref}
\usepackage[all]{xy}
\usepackage[a4paper,margin=2.5cm]{geometry}
\theoremstyle{plain}
\newtheorem{theorem}{Theorem}[section]
\newtheorem{lemma}[theorem]{Lemma}
\newtheorem{corollary}[theorem]{Corollary}
\newtheorem{proposition}[theorem]{Proposition}

\theoremstyle{definition}
\newtheorem{definition}[theorem]{Definition}
\newtheorem{example}[theorem]{Example}

\theoremstyle{remark}
\newtheorem{remark}[theorem]{Remark}

\numberwithin{equation}{section}

\title{Linear codes arising from geometrical operation}

\author{
Antonio Jesús Lorite López
\and
Daniel Camazón Portela
\and
Juan Antonio López Ramos\\
\normalsize Department of Mathematics, University of Almería\\
\normalsize Almería, Spain\\[6pt]
}

\date{} 

\begin{document}

\maketitle

\begin{abstract}
We construct linear codes over the finite field $\mathbb{F}_q$ from arbitrary simplicial complexes, establishing a connection between topological properties and fundamental coding parameters. First, we study the behaviour of the weights of codewords from a geometric point of view, interpreting them in terms of the combinatorial structure of the associated simplicial complex. This approach allows us to describe the minimum distance of the codes in terms of certain geometric features of the complex.

Subsequently, we analyse how various topological operations on simplicial complexes affect the classical parameters of the codes. This study leads to the formulation of geometric criteria that make it possible to explicitly control and manipulate these parameters.

Finally, as an application of the obtained results, we construct several families of optimal linear codes over $\mathbb{F}_2$ using these geometric methods. Thanks to the previously established geometric properties, we can precisely determine the parameters of these families.
\end{abstract}

\noindent\textbf{Keywords:}
{Linear codes, simplicial complex, minimum distance, geometric methods.}
\newline
\textbf{Mathematics Subject Classification:} 94B05, 94B27

\maketitle

\section{Introduction}

In recent decades, the interaction between coding theory and different areas of mathematics has given rise to new constructions and approaches of great interest. In particular, since the construction of linear codes from simplicial complexes was formalized \cite{Chang2018}, this topic has been the subject of study in numerous works (cf. \cite{Chang2018}, \cite{HuXuNi2024}, \cite{HyunJungyun2020}, \cite{LiuYu2023}, \cite{panliu2022}, \cite{SagarSarma2022LinearCodes}, \cite{SagarSarma2024}, \cite{WuZhuYue2019} for example). In these articles, the weights of the codewords have been analyzed mainly through formulas derived from the one introduced by Adamaszek \cite{adamszek}, based on the inclusion--exclusion principle, which allows one to obtain multivariable expressions for counting faces. This type of approach is largely disconnected from the geometry of the simplicial complex. The motivation of this work is to propose an alternative approach to the study of codewords and their weights, with the aim of providing a first geometric interpretation of weights in coding theory.

In \cite{HyunJungyun2020}, it is assumed that every maximal face of the simplicial complex must contain an isolated vertex; that is,
\begin{equation*}
A_i \setminus \bigcup_{j \in [s]\setminus\{i\}} A_j \neq \emptyset.
\end{equation*}
On the other hand, in \cite{HYUN2019135}, although the theoretical construction is generalized, in the study of concrete cases, attention is restricted to complexes with a unique maximal face in order to control the weight function and analyze its properties. In general, in previous works, the imposed hypotheses prevent addressing the study of codes associated with simplicial complexes with a richer geometric or topological structure, such as, for example, triangulations of (pseudo)varieties. The approach of this article aims to extend the knowledge of the code parameters to the case of arbitrary simplicial complexes.

To this end, we begin by presenting a series of preliminary notions that establish how the code will be defined to reflect the topological structure of the complex. Likewise, the notation that will be used throughout the article is introduced. Subsequently, in Section 3, one of the key results of this work is presented: Theorem \ref{Theorem 3.1}, which makes it possible to understand the weight in terms of intersections of the faces of the complex. From this result and the geometric approach to the weights of codewords, the remaining results are developed. One of the main advantages of this approach is that it allows one to dispense with the classical weight function, which becomes too complex in the case of arbitrary simplicial complexes. Following this perspective, a series of topological operations on complexes and how they affect the code are studied in Section 4. Finally, working over the field $\mathbb F_2$ to further facilitate geometric intuition, the construction of some families of codes is presented based on the results obtained previously.

From a broader point of view, this work can be interpreted as a first step toward a possible interaction between coding theory, algebraic geometry and ring theory, mediated by topological and geometric tools. The results obtained suggest that the structure of simplicial complexes can serve as a natural setting to explore new connections between both disciplines. In this sense, the present article aims to contribute to the construction of a common language that allows the transfer of intuitions and techniques between topology, geometry, and coding theory.

\section{Preliminaries}
The aim of this section is to introduce the main notions and notation that will be used throughout the paper.
\begin{definition}
A simplicial complex $\Delta$ is a family of subsets of $[n]=\{1,\dots,n\}$, such that if $\sigma \in \Delta$ and $\tau \subseteq \sigma $, then $\tau \in \Delta$.
\end{definition}

A linear code over $\mathbb{F}_q$ can be constructed from a simplicial complex $\Delta$. Let $\sigma \in \Delta$. We can identify $\sigma$ with its characteristic vector in $\mathbb{F}_q$, that is, $\sigma \sim \chi_\sigma \in \{0,1\}^n$, where

\begin{align*}
    \chi_\sigma(i) = \begin{cases}
            1, \text{ if } i \in \sigma,\\
            0, \text{ if } i \not \in \sigma,
           \end{cases} 
           \forall i \in [n]
\end{align*}

We then define $D := \{ \chi_\sigma \in \mathbb{F}_q^n : \sigma \in \Delta \}\subseteq \mathbb F_q^n$. The linear code associated with $\Delta$ can be defined as follows
\begin{equation*}
    C_\Delta=\lbrace c_\Delta(u)= (u\cdot x)_{x\in D} : u \in \mathbb{F}_q^n \rbrace,
\end{equation*} 
where $u \cdot x$ denotes the coordinatewise or Hadamard product of $u$ and $x$.

Equivalently, we can describe the definition matrix $G$ of the code as one whose rows are the elements of $D$; in that case:
\begin{equation*}
    C_\Delta=\lbrace uG^T : u \in \mathbb{F}_q^n \rbrace
\end{equation*}
Note that $G^T$ is a generator matrix of the code $C_\Delta$.

\begin{definition}
    A maximal simplex $A$ of $\Delta$ is a simplex that is not properly contained within any other simplex of the complex.
\end{definition} 

Thus, if $A_1,\dots,A_s$ are the maximal simplices of $\Delta$, we may write
\[
\Delta = \langle A_1,\dots,A_s \rangle
= \{ B \subseteq \{1,\dots,N\} \mid B \subseteq A_i \text{ for some } i \in \{1,\dots,s\} \}.
\]

\begin{definition}
    The complementary simplicial complex $ \Delta^c$ is defined by 
\begin{equation*}
    \Delta^c= \{ \sigma \in [n]: \sigma \not \in \Delta\}.
\end{equation*}
\end{definition}

The linear code $C_{\Delta^c}$ is called the anticode of $\Delta$.

\begin{definition}
    Let $\Delta$ be a simplicial complex. The link of a vertex, $lk_\Delta(e_i)$, is a collection of subsets satisfying $lk_\Delta(e_i)=\{ \tau \subset \Delta: \tau \cup  \{ e_i\}\in \Delta \}$. 
\end{definition}

\section{Geometrical and topological meaning of code parameters}
Let $\Delta$ be a simplicial complex and let $C_\Delta$ be its associated code,
\begin{equation*}
      C_\Delta=\lbrace (u\cdot x)_{x\in D} : u \in \mathbb{F}_q^n \rbrace.
\end{equation*}
From a geometric point of view, the product $(u \cdot x)$ can be understood as
\begin{equation*}
  (u\cdot x)=\sum_{v \in \sigma }u_v,
\end{equation*}
where $x$ is the characteristic vector of $\sigma$. Following this approach, it is possible to define
\begin{equation*}
l_\sigma(u) = \sum_{v \in \sigma} u_v,    
\end{equation*}
so that
\begin{equation*}
  c_\Delta(u)=(l_\sigma(u))_{\emptyset \neq \sigma \in \Delta}
\end{equation*}
and 
\begin{equation*}
  w(c_\Delta(u))=\#\{ \sigma \in \Delta: l_\sigma(u)) \not \equiv 0 \pmod{q} \}.
\end{equation*}
That is, the weight of a codeword can be interpreted as the number of simplices of the complex for which the sum over their vertices does not vanish on $u$.
Based on this perspective, the following result is presented:

\begin{theorem}\label{Theorem 3.1}
    Let $\Delta$ be a simplicial complex on a vertex set $V=\{1,\dots,k\}$ and let
    $C_\Delta$ be the associated linear code over $\mathbb{F}_q$.
    For any $u\in\mathbb{F}_q^k$ with $|\operatorname{supp}(u)|\ge 2$, there exists
    $u'\subset u $ such that $|supp(u')|=1$ and
    \[
    w(c_\Delta(u')) \le w(c_\Delta(u)).
    \]
\end{theorem}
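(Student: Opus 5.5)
We take $u'$ to be the restriction of $u$ to a single vertex of its support, and we show that \emph{any} vertex $v\in\operatorname{supp}(u)$ works. Define $u'$ by $u'_v=u_v$ and $u'_i=0$ for $i\neq v$, so $u'\subset u$ and $|\operatorname{supp}(u')|=1$. The argument compares the weights through the geometric description $w(c_\Delta(u))=\#\{\sigma\in\Delta:\ l_\sigma(u)\neq 0\}$ established above.

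\emph{Step 1: the weight of $u'$ is the size of the star of $v$.} For every nonempty $\sigma\in\Delta$ we have $l_\sigma(u')=u_v$ if $v\in\sigma$, and $l_\sigma(u')=0$ otherwise. Since $u_v\neq 0$,
\[
w(c_\Delta(u'))=\#\{\sigma\in\Delta:\ v\in\sigma\}.
\]
The map $\tau\mapsto\tau\cup\{v\}$ is a bijection from $lk_\Delta(v)=\{\tau\in\Delta:\ v\notin\tau,\ \tau\cup\{v\}\in\Delta\}$, including $\tau=\emptyset$, onto this set. Hence $w(c_\Delta(u'))=|lk_\Delta(v)|$.

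\emph{Step 2: a pairing argument for $u$.} Fix $\tau\in lk_\Delta(v)$ and form the pair $P_\tau=\{\tau,\ \tau\cup\{v\}\}$ of faces of $\Delta$. Since $v\notin\tau$,
\[
l_{\tau\cup\{v\}}(u)=l_\tau(u)+u_v .
\]
Because $u_v\neq0$, the two values $l_\tau(u)$ and $l_{\tau\cup\{v\}}(u)$ cannot both vanish, so at least one face of $P_\tau$ is counted in $w(c_\Delta(u))$. For $\tau=\emptyset$ we have $l_\emptyset(u)=0$, so the nonzero coordinate is the one indexed by $\{v\}$, which is a genuine coordinate of the code. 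Thus the convention that the empty face does not index a coordinate causes no problem.

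\emph{Step 3: disjointness of the pairs.} For distinct $\tau,\tau'\in lk_\Delta(v)$, the pairs $P_\tau$ and $P_{\tau'}$ are disjoint. Each face in a pair is determined by removing $v$: in $P_\tau$ the face without $v$ is $\tau$ and the face with $v$ is $\tau\cup\{v\}$, and $\tau$ is recovered from the latter as $(\tau\cup\{v\})\setminus\{v\}$. Each pair therefore contributes a distinct face with nonzero $l_\sigma(u)$. Combining with Step 1,
\[
w(c_\Delta(u))\ \ge\ |lk_\Delta(v)| = w(c_\Delta(u')).
\]

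The only delicate points are the bookkeeping of the empty face in Step 2 and checking disjointness in Step 3; both are settled above. Notably, the hypothesis $|\operatorname{supp}(u)|\ge 2$ is not used beyond guaranteeing a vertex in the support, so the inequality holds for every choice of $v\in\operatorname{supp}(u)$.
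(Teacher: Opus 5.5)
Your proof is correct and is essentially the paper's argument. The paper shows that $\tau\mapsto\tau\setminus\{v\}$ injects the faces containing $v$ with vanishing sum into the faces not containing $v$ with nonvanishing sum, which is exactly your observation that each pair $\{\tau,\tau\cup\{v\}\}$ contains at least one counted face. Your version, which works with $l_{\tau\cup\{v\}}(u)=l_\tau(u)+u_v$ directly in $\mathbb{F}_q$, is cleaner than the paper's device of lifting the $u_v$ to $\mathbb{Z}$ and reducing mod $q$, since that device only makes sense when $q$ is prime.
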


\begin{proof} 
    Fix $u \in \mathbb{F}_q^k : u=\{ u_1,...,u_k\}$ and $u_i\in \mathbb{F}_q^*$. 
    For $\sigma\in\Delta$, we can extend
    \[
    \tilde l_\sigma(u)=\sum_{v\in\sigma}\tilde u_v\in\mathbb Z,
    \]
    where $\tilde{u_v}$ is the lifting of $u_v$ into $\mathbb{Z}$.
    
     Recall $ w(c_\Delta(u))=\#\{ \sigma \in \Delta: l_\sigma(u)) \not \equiv 0 \pmod{q} \}$ and $c_\Delta(u)=(l_\sigma(u))_{\emptyset \neq \sigma \in \Delta}\in \mathbb{F}_q^n$, where $n=|\Delta|-1$. We define
    \[
    \Delta^a=\{ \tau \in \Delta : \sum_{v\in\tau}\tilde{u_v}=a\}.
    \]
    and
    \[
    \Delta^a_{v_i}=\{ \tau \in \Delta : \sum_{v\in\tau}\tilde{u_v}=a \hspace{2mm}\land \hspace{2mm} v_i \in \tau\}.
    \]
    Therefore, $w(c_\Delta(u))=\sum_{a\neq \mathbb{Z}q}|\Delta^a|$.
    Let $u'\subset u $ such that $supp(u')=v_i \in \mathbb{F}_q^*$. Since $u'$ is supported in a single vertex $v_i$, for any face $\tau$ containing $v_i$, we have $\tilde l _\tau(u')=\tilde u_i \not \in \mathbb{Z}q$, hence all such faces contribute to the weight.
    \[
    w(c_\Delta(u'))=\sum|\Delta^a_{v_i}|
    .\]
    We use the decomposition
    \[
    |\Delta^a| = |\Delta^a_{v_i}| + |\Delta^a_{\neg v_i}|
    \qquad \text{for all } a\in\mathbb Z.
    \]
    
    Therefore,
    \[
    \sum_{a\notin q\mathbb Z} |\Delta^a|
    =
    \sum_{a\notin q\mathbb Z} |\Delta^a_{v_i}|
    +
    \sum_{a\notin q\mathbb Z} |\Delta^a_{\neg v_i}|.
    \]
    
    Since
    \[
    w(c_\Delta(u'))=\sum_{a\in\mathbb Z}|\Delta^a_{v_i}|,
    \qquad
    w(c_\Delta(u))=\sum_{a\notin q\mathbb Z}|\Delta^a|,
    \]
    the inequality \(w(c_\Delta(u'))\le w(c_\Delta(u))\) is equivalent to
    \[
    \sum_{a\in q\mathbb Z}|\Delta^a_{v_i}|
    \le
    \sum_{a\notin q\mathbb Z}|\Delta^a_{\neg v_i}.
    \]
    
    Now, for each $t=pq$, define a map
    \begin{gather*}
    \Phi_t:\Delta^{pq}_{v_i}\to\Delta^{pq-\tilde u_i}_{\neg v_i},
    \qquad
    \tau\mapsto\tau\setminus\{v_i\}.
    \end{gather*}
    This map is well defined as $\tau \in \Delta$ implies $\tau \setminus \{ v_i\} \in \Delta$. Moreover, $pq-\tilde u_i\notin\mathbb Z q$ since $\tilde u_i\notin\mathbb Z q$, and $p_1q-\tilde u_i=p_2q-\tilde u_i$ if and only if $p_1=p_2$.The map is also injective.  Therefore, $|\Delta^{(pq)}_{v_i}|\leq |\Delta^{(pq-\tilde u_i)}_{\neg v_i}|$.
    
     Summing over all $t$, we obtain
    \[
    \sum_{j=\mathbb{Z}q}|\Delta^{j}_{v_i}|
    \le
    \sum_{j = \mathbb{Z}q}|\Delta^{j-\tilde u_i}_{\neg v_i}|\le
    \sum_{j \neq \mathbb{Z}q}|\Delta^j_{\neg v_i}|.
    \]
    
    Hence, $w(c_\Delta(u)) \ge w(c_\Delta(u'))$. 
\end{proof}

\begin{corollary}\label{Corollary 3.2.}
     The minimum weight of the code, and hence its minimum distance, is determined by vectors $u \in \mathbb{F}_q^n$ whose support consists of a single vertex of the complex.
\end{corollary}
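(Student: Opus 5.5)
The corollary follows from Theorem \ref{Theorem 3.1} by a short minimisation argument. Since $C_\Delta$ is linear, its minimum distance $d(C_\Delta)$ equals its minimum weight,
\[
d(C_\Delta)=\min\{\, w(c_\Delta(u)) : u\in\mathbb{F}_q^n,\ c_\Delta(u)\neq 0 \,\}.
\]
It therefore suffices to show that this minimum is attained at some $u$ with $|\operatorname{supp}(u)|=1$.

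First, I will check that single-vertex vectors really give nonzero codewords. If $\operatorname{supp}(u)=\{v\}$, then the face $\{v\}\in\Delta$ has $l_{\{v\}}(u)=u_v\neq 0$. This assumes every vertex of $[n]$ is a vertex of $\Delta$, which the paper implicitly takes as the vertex set. Hence $w(c_\Delta(u))\ge 1$, and the set of single-support candidates is nonempty.

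Next, take any $u$ with $c_\Delta(u)\neq 0$; in particular $u\neq 0$. There are two cases.
\begin{itemize}
\item If $|\operatorname{supp}(u)|=1$, there is nothing to do.
\item If $|\operatorname{supp}(u)|\ge 2$, Theorem \ref{Theorem 3.1} gives $u'$ supported on a single vertex $v_i$ of $\operatorname{supp}(u)$ with $w(c_\Delta(u'))\le w(c_\Delta(u))$. By the previous step, $c_\Delta(u')$ is a nonzero codeword.
\end{itemize}
In both cases some single-vertex $u'$ has weight at most $w(c_\Delta(u))$. Taking the minimum over all $u$ yields
\[
d(C_\Delta)=\min_{v\in[n],\ a\in\mathbb{F}_q^*} w(c_\Delta(a e_v)).
\]

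Finally, I can make the formula explicit. For $u=a e_v$ with $a\in\mathbb{F}_q^*$, we have $l_\sigma(u)=a$ if $v\in\sigma$ and $0$ otherwise. So the weight is the number of faces containing $v$, which is independent of $a$:
\[
w(c_\Delta(a e_v))=\#\{\sigma\in\Delta : v\in\sigma\}=|lk_\Delta(e_v)|,
\]
where $lk_\Delta(e_v)$ is read as including the empty face (so it is in bijection with the star of $v$). This gives
\[
d(C_\Delta)=\min_v |lk_\Delta(e_v)|.
\]

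The only real subtlety is the well-definedness point above: nonzero codewords must come from nonzero $u$, and single-support $u$ must not collapse to the zero codeword. Both follow from every vertex being a face of $\Delta$.
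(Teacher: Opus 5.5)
Your proposal is correct and follows the paper, which treats this corollary as an immediate consequence of Theorem 3.1: every vector with support of size at least two is dominated in weight by a single-vertex vector, so the minimum weight over nonzero codewords is attained on single-vertex vectors. Your two additions match what the paper states informally right after the corollary. One is the check that single-vertex vectors give nonzero codewords. The other is the explicit count $w(c_\Delta(a e_v))=\#\{\sigma\in\Delta : v\in\sigma\}$, which is independent of $a$.
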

Another important consequence that can be drawn from this result is that the distance of the code does not depend on the chosen value of $q$. Assuming that $u$ has unit support, one can see that, since
\[
l_\sigma(u) \not\equiv 0 \pmod{q} \iff {supp}(u) \cap \sigma \neq \emptyset,
\]
it follows that
\begin{equation*}
w(c_\Delta(u)) = \#\left\{ \sigma \in \Delta : supp(u) \cap \sigma \neq \emptyset \right\}.
\end{equation*}

Therefore, the value of the weight depends only on the support of $u$.

\begin{corollary}
For a simplicial complex $\Delta$ of $\mathbb{F}_q^n$
whose set of maximal elements is $\{A_1,\dots,A_s\}$ and such that
$\cup_{i=1}^s A_i = [k]$,
if each $A_i$ satisfies
$A_i \setminus \bigcup_{j \in [s]\setminus i}A_j\neq \emptyset$,
then the parameters of $C_{\Delta^*}$ are

\begin{equation*}
    [|\Delta|-1, k, \min\{ 2^{|A_i|-1} : i \in [s]\}].
\end{equation*}

In general, for an arbitrary simplicial complex $\Delta$, the parameters of $C_{\Delta^*}$ are

\begin{equation*}
    [|\Delta|-1, k, \min\{ lk(i) : i \in [k]\}].
\end{equation*}
\end{corollary}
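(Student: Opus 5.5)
The three parameters are handled separately, and the minimum distance is obtained by combining Corollary \ref{Corollary 3.2.} with a direct count of the faces meeting a single vertex.

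\textbf{Length.} The coordinates of $c_\Delta(u)$ are indexed by the nonempty faces of $\Delta$, so $n=|\Delta|-1$; the $-1$ removes the empty face, whose coordinate is identically zero. Here $\Delta^*$ is read as $\Delta$ with the empty face deleted.

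\textbf{Dimension.} Since $\bigcup_i A_i=[k]$, every singleton $\{v\}$ is a face. The generator matrix $G^T$ therefore contains the $k$ columns $\chi_{\{v\}}=e_v$, i.e.\ an identity block, so it has rank $k$ and $u\mapsto c_\Delta(u)$ is injective. Hence $\dim C_{\Delta^*}=k$.

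\textbf{Minimum distance.} By Corollary \ref{Corollary 3.2.}, the minimum weight is attained at some $u$ supported on a single vertex $i$. By the remark following that corollary, $w(c_\Delta(u))=\#\{\sigma\in\Delta:i\in\sigma\}$. The map $\sigma\mapsto\sigma\setminus\{i\}$ is a bijection from these faces onto $lk_\Delta(i)$, the faces $\tau$ with $i\notin\tau$ and $\tau\cup\{i\}\in\Delta$, counting the empty face in the link. Thus $d=\min_i |lk_\Delta(i)|$, which is the general formula, with $lk(i)$ read as its cardinality.

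\textbf{The special case.} Assume each $A_i$ has a private vertex $v_i\in A_i\setminus\bigcup_{j\ne i}A_j$. Then the only maximal face containing $v_i$ is $A_i$, so $lk(v_i)$ is the full power set of $A_i\setminus\{v_i\}$, of size $2^{|A_i|-1}$. This shows $d\le\min_i 2^{|A_i|-1}$.

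For the reverse inequality, take any vertex $j$ and choose a maximal face $A_i\ni j$. Then $lk(j)$ contains every subset of $A_i\setminus\{j\}$, so $|lk(j)|\ge 2^{|A_i|-1}\ge\min_m 2^{|A_m|-1}$. Together these give the stated $d$.

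\textbf{Expected obstacle.} Little here is hard once Theorem \ref{Theorem 3.1} is taken as given. The delicate points are notational: reading $lk(i)$ as a cardinality that includes the empty face, and reading $\Delta^*$ as $\Delta\setminus\{\emptyset\}$. The full-rank argument is also where the hypothesis $\bigcup A_i=[k]$ is essential, since it guarantees that every singleton is a face.
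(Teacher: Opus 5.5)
Your proof is correct and follows the paper's route. The paper gives no separate argument: it treats the corollary as immediate from Corollary~\ref{Corollary 3.2.} together with the remark that a single-vertex codeword has weight equal to the number of faces through that vertex. Your identity-block rank argument and the bijection $\sigma\mapsto\sigma\setminus\{i\}$ onto the link make explicit what the paper leaves implicit. Your reading of $lk(i)$ with the condition $i\notin\tau$ is also the right one, since the paper's literal definition of the link omits that condition and would double the count.
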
 

\section{Construction of new codes via topological operations}

\begin{proposition} 
    Let $\Delta = \Delta_1 \sqcup \Delta_2$ be a simplicial complex over $\mathbb F_q$
    with two disjoint connected components, and let $\widetilde{\Delta}$ be the simplicial
    complex obtained by identifying a face $F_1 \subset \Delta_1$ with a face
    $F_2 \subset \Delta_2$ of the same dimension.
    Then the minimum distance of the associated linear code does not decrease under
    the identification, that is,
    \[
    d(\widetilde{\Delta}) \ge d(\Delta).
    \]
\end{proposition}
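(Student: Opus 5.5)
By Corollary~\ref{Corollary 3.2.} and the remark after it, the minimum distance of $C_\Delta$ is attained at vectors $u$ supported on a single vertex. For such $u$ the weight is the size of the open star of that vertex:
\[
d(\Delta)=\min_{v}\ \#\{\sigma\in\Delta:\ v\in\sigma\}.
\]
This holds independently of $q$. So the plan is to compare, vertex by vertex, the stars in $\widetilde{\Delta}$ with the stars in $\Delta$, and to show that every vertex of $\widetilde{\Delta}$ has a star at least as large as the star of some vertex of $\Delta$.

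First I would fix the model of the gluing. Choose a bijection $\phi:F_1\to F_2$ (possible since the faces have the same dimension). Let $\pi$ be the quotient map on vertex sets that sends $v\in F_2$ to $\phi^{-1}(v)$ and is the identity elsewhere. Define $\widetilde{\Delta}=\{\pi(\sigma):\sigma\in\Delta\}$.

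This is a simplicial complex, because a subset of $\pi(\sigma)$ is $\pi(\tau)$ for some $\tau\subseteq\sigma$. The key observation is that $\pi$ is injective on the vertex set of each component $\Delta_j$ separately. On $\Delta_1$ it is the identity, and on $\Delta_2$ it is the bijection $\phi^{-1}$ on $F_2$ and the identity elsewhere, with image disjoint from $V(\Delta_1)\setminus F_1$. Hence $\pi$ restricted to $\Delta_j$ is an injective map of faces $\Delta_j\hookrightarrow\widetilde{\Delta}$.

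Next, take any vertex $w$ of $\widetilde{\Delta}$ and pick a preimage $v\in V(\Delta_j)$. Because $\Delta$ is a disjoint union, every face of $\Delta$ containing $v$ lies in $\Delta_j$. By injectivity of $\pi|_{\Delta_j}$, the images of these faces are distinct faces of $\widetilde{\Delta}$, and each contains $w$. Therefore
\[
\#\{\widetilde\sigma\in\widetilde{\Delta}: w\in\widetilde\sigma\}\ \ge\ \#\{\sigma\in\Delta: v\in\sigma\}\ \ge\ d(\Delta).
\]
For an identified vertex $w=\pi(v_1)=\pi(\phi(v_1))$ the inequality is in fact usually strict, since faces from both sides contribute. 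Taking the minimum over $w$ and applying Corollary~\ref{Corollary 3.2.} to $\widetilde{\Delta}$ gives $d(\widetilde{\Delta})\ge d(\Delta)$.

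The main obstacle is bookkeeping rather than depth. The gluing merges faces: every subface of $F_1$ is identified with the corresponding subface of $F_2$, so $|\widetilde{\Delta}|<|\Delta|$ and the two codes have different lengths. A naive face-by-face comparison across both components would therefore fail. The argument avoids this by pushing forward only the star of a single chosen preimage, which lives entirely in one component where $\pi$ is injective. I would also state explicitly that Theorem~\ref{Theorem 3.1} applies verbatim to $\widetilde{\Delta}$, since it holds for arbitrary complexes, so the single-vertex reduction is valid on both sides.
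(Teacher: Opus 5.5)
Your proof is correct and follows the same route as the paper. You reduce to single-vertex supports via Corollary~\ref{Corollary 3.2.}, then observe that the star of every vertex of $\widetilde{\Delta}$ contains the injective image of the star of a preimage in its original component. Your uniform treatment, through a chosen preimage and the injectivity of $\pi$ on each $\Delta_j$, is if anything cleaner than the paper's case split, which explicitly tracks only the vertices supporting minimum-weight codewords of $C_\Delta$.
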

\begin{proof}
    We regard $\widetilde{\Delta}$ as the quotient of $\Delta$ obtained by identifying
    the faces $F_1$ and $F_2$, and for each simplex $\sigma \in \Delta$ we denote by
    $\widetilde{\sigma}$ its image in the quotient.
    
    By Corollary \ref{Corollary 3.2.}, the minimum-weight codewords of the code associated to a simplicial
    complex are precisely those whose support consists of a single vertex.
    There are two possible cases.
    
    First, suppose that there exists a minimum-weight codeword whose support is a vertex
    that is not identified in the quotient. In this case, the identification does not
    affect the set of simplices intersecting the support; therefore, the weight of
    the corresponding codeword remains unchanged.
    
    Otherwise, every vertex supporting a minimum-weight codeword belongs to one of the
    faces that are identified.
    Since such codewords have support consisting of a single vertex, we have
    \begin{equation*}
    w(c_{\widetilde{\Delta}}(\widetilde{u}))
    =
    \#\left\{ \widetilde{\sigma} \in \widetilde{\Delta}
    :\operatorname{supp}(\widetilde{u}) \cap \widetilde{\sigma} \neq \emptyset \right\}
    \ge
    \#\{ \sigma \in \Delta
    :\operatorname{supp}(u) \cap \sigma \neq \emptyset \}=w(c_{{\Delta}}({u})).
    \end{equation*}
    Indeed, since $\Delta_1$ and $\Delta_2$ are disjoint connected components, no simplex
    of $\Delta$ contains vertices from both components.
    Therefore, under the identification, intersections with simplices in the original
    component are preserved, while additional intersections may only arise from simplices
    in the other component.
    This shows that the weight cannot decrease, and hence
    $d(\widetilde{\Delta}) \ge d(\Delta)$.
\end{proof}

\begin{remark}
    In the situation of the previous proposition, suppose that every minimum--weight
    codeword of $C_\Delta$ is supported on a vertex belonging to one of the faces that are
    identified.
    Let $u$ be such a codeword and let $\widetilde{u}$ denote its image in the quotient.
    
    In this case, the increase of the weight produced by the identification can be described
    explicitly.
    Writing $\widetilde{\Delta}=\widetilde{\Delta}_1\cup\widetilde{\Delta}_2$, we have
    \[
    \mathrm{wt}(\widetilde{u}) = \mathrm{wt}(\widetilde{u}_{|\widetilde{\Delta}_1}) + \mathrm{wt}(\widetilde{u}_{|\widetilde{\Delta}_2}) - \mathrm{wt}(\widetilde{u}_{|F_1}),
    \]
    and since $\mathrm{wt}(\widetilde{u}_{|\widetilde{\Delta}_1})=d(\Delta)$, the contribution
    to the increase comes exclusively from simplices in $\Delta_2$ containing the
    identified face.
    More precisely,
    \[ \mathrm{wt}(\widetilde{u}) = d(\Delta) + \sum_{S \subset \Delta_2 \,:\, F_2 \subset S} \left( \mathrm{wt}(\widetilde{u}_{|S}) - \mathrm{wt}(\widetilde{u}_{|F_2}) \right).
    \]
\end{remark}

\begin{example}
    Let $\Delta$ be a simplicial complex over $\mathbb F_2$ whose set of maximal faces is \\ $\{\{0,1\},\{0,2,3\},\{4,5,6,7\}\}$. To determine the length of the code, it suffices to count the number of non-empty simplices of $\Delta$, since the generator matrix has one column for each such simplex. Therefore, multiplying a vector of dimension $k$ by this matrix yields a codeword whose length coincides with the number of columns. The dimension of the code is equal to the number of vertices of $\Delta$, which is equal to the number of rows in the generating matrix. To compute the minimum distance, we rely on Corollary $2.2$. Vertex $\{1\}$ is the one contained in the smallest number of simplices, as it only appears in the simplex $\{1\}$ and in the edge $\{0,1\}$. Hence, the minimum distance of the code is $2$. Collecting these results, the code $C_\Delta$ has parameters $[24,8,2]$. Meanwhile, if we identify the vertices according to the relations $1 \sim 4$, $2 \sim 5$, and $3 \sim 6$, we obtain a simplicial complex $\tilde{\Delta}$. By an argument analogous to the one above, we conclude that the associated code $C_{\tilde{\Delta}}$ has parameters $[20,5,5]$.
\end{example}
\begin{figure}[h]
  \centering
  \includegraphics[width=0.6\textwidth]{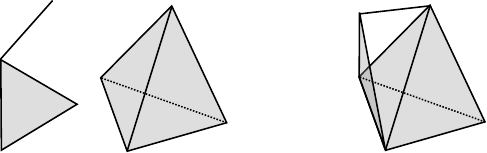}  
  \caption{Example of vertex gluing}
  \label{fig:identification}
\end{figure}

\begin{proposition} \label{Proposition 4.3}
    Let $\Delta$ be a simplicial complex and let $cone(\Delta)$ denote the cone over $\Delta$ with apex $c$.
    Let $u$ be a nonzero vector of the code associated to $cone(\Delta)$. If $c \notin supp(u)$, then $w_{cone(\Delta)}(u)=2\,w_{\Delta}(u_{|\Delta})$. If $c \in supp(u)$, then $w_{cone(\Delta)}(u)\geq |\Delta|-1$.
\end{proposition}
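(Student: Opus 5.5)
The plan is to compute the weight directly from the face structure of $cone(\Delta)$, using the description $w(c_\Delta(u))=\#\{\sigma : l_\sigma(u)\neq 0\}$ from Section 3, rather than the reduction of Theorem \ref{Theorem 3.1}. First I will fix the face list. The nonempty faces of $cone(\Delta)$ split into two disjoint families:
\begin{itemize}
\item the nonempty faces $\sigma\in\Delta$, of which there are $|\Delta|-1$;
\item the faces $\sigma\cup\{c\}$ with $\sigma\in\Delta$, including $\sigma=\emptyset$, which gives the apex $\{c\}$; there are $|\Delta|$ of these.
\end{itemize}
For every $\sigma\in\Delta$ we have the identity
\[
l_{\sigma\cup\{c\}}(u)=l_\sigma(u)+u_c .
\]
Both cases will be read off from this identity by pairing $\sigma$ with $\sigma\cup\{c\}$.

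\emph{Case $c\notin supp(u)$.} Here $u_c=0$, so $l_{\sigma\cup\{c\}}(u)=l_\sigma(u)$ for every $\sigma$. For $\sigma=\emptyset$ this gives $l_{\{c\}}(u)=0$, so the apex contributes nothing. For each nonempty $\sigma\in\Delta$, the faces $\sigma$ and $\sigma\cup\{c\}$ are either both nonzero coordinates or both zero coordinates. Moreover, $l_\sigma(u)=l_\sigma(u_{|\Delta})$ because $\sigma$ does not contain $c$. Counting the nonzero coordinates therefore gives exactly $2\,w_\Delta(u_{|\Delta})$.

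\emph{Case $c\in supp(u)$.} Put $a=u_c\neq 0$. For each nonempty $\sigma\in\Delta$, the two values $l_\sigma(u)$ and $l_\sigma(u)+a$ cannot both vanish, since their difference is $a\ne 0$. Hence each of the $|\Delta|-1$ pairs $\{\sigma,\sigma\cup\{c\}\}$ contributes at least one nonzero coordinate. The apex contributes one more, since $l_{\{c\}}(u)=a\neq0$. This gives $w_{cone(\Delta)}(u)\ge|\Delta|$, which is stronger than the stated bound $|\Delta|-1$.

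The only delicate point is the bookkeeping of the empty face. One must remember that $\emptyset\cup\{c\}=\{c\}$ is a genuine nonempty face of the cone, while $\emptyset$ itself is excluded from the coordinates. In the first case the apex must be checked to give a zero coordinate so that the factor is exactly $2$, and in the second case it is an extra nonzero coordinate. Beyond this, the argument is elementary.
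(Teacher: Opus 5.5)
Your proof is correct and is essentially the paper's argument: you pair each $\sigma\in\Delta$ with $\sigma\cup\{c\}$ and use $l_{\sigma\cup\{c\}}(u)=l_\sigma(u)+u_c$. Your explicit handling of the apex face $\{c\}=\emptyset\cup\{c\}$ improves on the paper, which discards the pair $\{\emptyset,\{c\}\}$ and so only obtains $|\Delta|-1$: you get the sharp bound $w_{cone(\Delta)}(u)\ge|\Delta|$, attained by $u=e_c$, and this is exactly what the following corollary $d'=2d$ needs, since $2d\le|\Delta|$ whereas $|\Delta|-1$ alone only yields $d'\ge 2d-1$.
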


\begin{proof} 
    By definition, the cone over $\Delta$ is given by $
    cone(\Delta) =  \{ \langle \mathcal{F}_i \cup \{c\}\rangle \}
    $, where $\mathcal{F}_i$ are the maximal elements of $\Delta$.
    
    Assume first that $c \notin \operatorname{supp}(u)$.
    For any simplex $\sigma \in \Delta$, we have
    \[
    \sum_{v\in \sigma}u_v\equiv \sum_{v\in \sigma\cup \{ c \}}u_v\pmod q. 
    \]
    Therefore, $\sigma$ contributes to the weight of $u$ if and only if
    $\sigma \cup \{c\}$ does.
    It follows that each contributing simplex in $\Delta$ gives rise to exactly two
    contributing simplices in $cone(\Delta)$, and hence
    \[
    w_{cone(\Delta)}(u) = 2\,w_{\Delta}(u_{|\Delta}).
    \]
    Assume now that $c \in supp(u)$.
    For every simplex $\sigma \in \Delta$, at least one of the two simplices
    $\sigma$ or $\sigma \cup \{c\}$ satisfies
    \[
    |\tau \cap supp(u)| \not\equiv 0 \pmod q.
    \]
    Consequently, at least one simplex in each pair
    $\{\sigma,\sigma \cup \{c\}\}$ contributes to the weight.
    Since $\Delta$ has $|\Delta|$ simplices and the empty simplex does not contribute,
    we obtain
    \[
    w_{\operatorname{cone}(\Delta)}(u)\geq |\Delta|-1.
    \] 
\end{proof}

\begin{corollary}
     Let the code associated to $\Delta$ have parameters $[n,k,d]$. Then the code associated with $cone(\Delta)$ has a minimum distance of $d'=2d$. Furthermore,  the length becomes $n'=2n+1$ and the dimension $k'=k+1$.    
\end{corollary}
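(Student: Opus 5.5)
The plan is to compute the three parameters of the code of $cone(\Delta)$ one at a time. For the distance I will use Corollary \ref{Corollary 3.2.} to reduce to codewords supported on a single vertex, and then handle the apex $c$ separately from the vertices of $\Delta$.

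\emph{Length and dimension.} Every face of $cone(\Delta)$ is either a face $\sigma\in\Delta$ or a face $\sigma\cup\{c\}$ with $\sigma\in\Delta$, where $\sigma=\emptyset$ is allowed in the second case. Hence $|cone(\Delta)|=2|\Delta|$. Removing the empty face, the number of nonempty faces is $2|\Delta|-1$. Since $n=|\Delta|-1$, this equals $2n+1$. For the dimension, the vertex set grows from $k$ to $k+1$. The generator matrix has a column for each singleton, so it contains the $(k+1)\times(k+1)$ identity as a submatrix and has full rank. This gives $k'=k+1$.

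\emph{Minimum distance.} By Corollary \ref{Corollary 3.2.}, it suffices to consider $u$ with $\operatorname{supp}(u)=\{v\}$. In that case the weight is the number of faces containing $v$. There are two cases.

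\begin{itemize}
\item If $v\neq c$, the first part of Proposition \ref{Proposition 4.3} gives weight $2w_\Delta(u_{|\Delta})$. Directly, the faces of the cone containing $v$ are exactly the $\sigma$ and $\sigma\cup\{c\}$ with $v\in\sigma\in\Delta$. Minimizing over $v$ gives $2d$.
\item If $v=c$, every face $\sigma\cup\{c\}$ with $\sigma\in\Delta$ contains $c$, so the weight is exactly $|\Delta|=n+1$.
\end{itemize}

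The main obstacle is to show that the apex does not beat $2d$, that is, $n+1\ge 2d$. The bound $|\Delta|-1$ from Proposition \ref{Proposition 4.3} alone is not quite enough, so I would argue as in the proof of Theorem \ref{Theorem 3.1}. Fix any vertex $v$ of $\Delta$. The map $\tau\mapsto\tau\setminus\{v\}$ is an injection from the faces of $\Delta$ containing $v$ into the faces of $\Delta$ not containing $v$. Hence
\[
\#\{\tau\in\Delta : v\in\tau\}\le \tfrac{1}{2}|\Delta|.
\]
Taking $v$ to realize $d$ gives $2d\le |\Delta|=n+1$. So the minimum over all single-vertex supports is $2d$, and therefore $d'=2d$.
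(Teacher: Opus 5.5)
Your proposal is correct, and it takes a different route from the paper. The paper states the corollary without a separate proof, as a consequence of Proposition~\ref{Proposition 4.3}. That proposition treats \emph{all} codewords of the cone, split by whether the apex lies in the support. If $c\notin\operatorname{supp}(u)$ the weight is exactly doubled. If $c\in\operatorname{supp}(u)$, the pairing $\{\sigma,\sigma\cup\{c\}\}$ gives only $w\ge|\Delta|-1=n$.

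The paper's route gives more global information: every codeword avoiding the apex has its weight exactly doubled, and this does not rely on Theorem~\ref{Theorem 3.1}. But, as you noticed, it does not by itself yield $d'=2d$. That would require $n\ge 2d$, which fails for example when $\Delta$ is a full $N$-simplex, where $2d=2^{N+1}=|\Delta|=n+1$.

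You instead apply Corollary~\ref{Corollary 3.2.} to the cone itself, so only the $k+1$ single-vertex codewords matter. You then compute the apex codeword's weight exactly as $|\Delta|=n+1$. Finally, you supply the inequality $2d\le|\Delta|$ via the injection $\tau\mapsto\tau\setminus\{v\}$, which is the ingredient the paper's derivation tacitly needs.

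Equivalently, one can repair the paper's own argument. The pair $\{\emptyset,\{c\}\}$ also contributes, through $\{c\}$, which sharpens the bound to $w\ge|\Delta|$ whenever $c\in\operatorname{supp}(u)$. Combining this with your inequality $2d\le|\Delta|$ gives $d'=2d$.
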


\begin{example}
    Let $\Delta$ be a simplicial complex whose set of maximal faces is $\{ \{0,1,2\},\{0,1,3\},$ $\{0,2,3\},\{1,2,3\}\}$, the 2-skeleton of a 3-simplex. The associated code $C_{\Delta^*}$ has parameters $[14,4,7]$. If we take the cone over $\Delta$, we obtain a new code with parameters $[29,4,14]$. 
\end{example}

If all maximal faces contain a common vertex $c$, it is possible to define an operator that acts as an inverse of the operation $cone$ as the composition of elementary collapses with respect to that vertex. More precisely, given a simplicial complex containing a maximal face $A_i=\langle v_1,\dots,v_n,c\rangle$, that maximal face and the free faces containing the common vertex $c$ are removed,
\begin{equation*}
  \langle v_1,\dots,v_n,c\rangle \longrightarrow \bigcup_{i\in[n]} \langle v_1,\dots,\widehat{v_i},\dots,v_n\rangle,
\end{equation*}
where the symbol $\widehat{v_i}$ denotes the omission of the vertex $v_i$. Repeating this process for all maximal faces of the complex yields a simplicial complex that can be interpreted as the inverse of the operation $cone$.

In the general case, when there is no vertex common to all maximal faces, one can define an analogous operator that reduces the dimension of the simplicial complex, namely, the boundary operator. Given an arbitrary maximal face, one can apply the inverse operation of the operator $cone$ by removing the free faces containing each of its vertices. By performing this process for all vertices of the maximal face, that is, by means of a combination of elementary collapses, the boundary of the maximal face is obtained. Repeating this procedure for all maximal faces of the complex yields the boundary of the simplicial complex.

\begin{proposition}
    Let $\Delta$ be a simplicial complex, and let $\partial(\Delta)$ denote its boundary complex, that is, the subcomplex consisting of all non-maximal simplices of $\Delta$. Passing from $\Delta$ to $\partial(\Delta)$ increases the minimum distance of the associated anticode, whereas for the corresponding linear code the minimum distance does not increase.

    Moreover, if $\Delta$ has $s$ maximal faces and no isolated vertices, then the associated codes satisfy
    \[
    d(C_{\Delta^*}) \ge d(C_{\partial(\Delta)^*}),
    \]
    the length decreases from $n$ to $n' = n - s$, and the dimension remains unchanged.
\end{proposition}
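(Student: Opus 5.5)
The plan is to reduce everything to single-vertex weights via Corollary \ref{Corollary 3.2.}. As noted after Theorem \ref{Theorem 3.1}, for $u$ supported on one vertex $i$ the weight is $\#\{\sigma\in\Delta:\ i\in\sigma\}$. This count is exactly the number of nonempty faces of the closed star of $i$, that is $|lk_\Delta(i)|$ counted with the empty face. So for any complex $\Gamma$ we get
\[
d(C_\Gamma)=\min_{i}\#\{\sigma\in\Gamma: i\in\sigma\},
\]
the minimum taken over vertices of $\Gamma$. The whole comparison between $\Delta$ and $\partial(\Delta)$ then becomes a comparison of these star counts, vertex by vertex.

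\emph{Linear code.} Since $\partial(\Delta)\subseteq\Delta$ is obtained by deleting the $s$ maximal faces, for each vertex $i$ we have
\[
\#\{\sigma\in\partial(\Delta): i\in\sigma\}=\#\{\sigma\in\Delta: i\in\sigma\}-\#\{j: i\in A_j\}\le \#\{\sigma\in\Delta: i\in\sigma\}.
\]
The no-isolated-vertex hypothesis has two roles. First, every vertex still lies in some non-maximal face, at least $\{i\}$ itself, provided every maximal face has dimension $\ge1$. Hence the vertex set, and so the dimension $k$, is unchanged. Second, every star count strictly drops. Taking minima over $i$ gives $d(C_{\partial(\Delta)})\le d(C_\Delta)$, which is the inequality $d(C_{\Delta^*})\ge d(C_{\partial(\Delta)^*})$. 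The length is the number of nonempty faces, and exactly the $s$ maximal faces are removed, so $n'=n-s$.

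\emph{Anticode.} Here I would argue with complements. Write $2^{[k]}$ for the full simplex. Then $\partial(\Delta)^c=\Delta^c\sqcup\{A_1,\dots,A_s\}$, so every star count in the anticode family increases by $\#\{j:i\in A_j\}\ge1$. By the same single-vertex formula, the minimum distance of the anticode increases. The anticode is not a simplicial complex, so Theorem \ref{Theorem 3.1} does not literally apply to it. I would therefore either interpret ``minimum distance'' through the single-vertex weights, consistent with how the paper treats $C_{\Delta^*}$, or check directly that adding columns can only increase every codeword weight. Indeed, adding the columns $\chi_{A_j}$ appends coordinates to each codeword, so $w$ is non-decreasing for every $u$, and hence so is $d$.

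The main obstacle is the anticode part. Appending columns only gives ``does not decrease''. Proving a strict increase requires showing that a minimum-weight word of $C_{\Delta^c}$ has $l_{A_j}(u)\ne0$ for some $j$. I would first try the case where $u$ is a minimum-weight word supported on a single vertex $i$. Then $l_{A_j}(u)=u_i\neq0$ for any $A_j\ni i$, and such a $j$ exists because $i$ is not isolated. The remaining general case, where minimum weight is attained by a vector of larger support, I would leave to the same single-vertex reduction, and flag it as the point that needs care.
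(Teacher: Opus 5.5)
Your linear-code part is correct and differs only mildly from the paper. The paper notes that passing to $\partial(\Delta)$ deletes the $s$ coordinates indexed by the $A_j$, and deleting coordinates cannot raise the weight of any codeword. You instead apply Corollary~3.2 to both complexes and get the sharper formula $d(C_{\partial(\Delta)^*})=\min_i\bigl(\#\{\sigma\in\Delta: i\in\sigma\}-\#\{j: i\in A_j\}\bigr)$, which even gives a strict drop. (Small slip: $\#\{\sigma: i\in\sigma\}$ is the size of the open star of $i$, not the closed star.) For the anticode, your appended-columns argument is exactly the paper's proof. Like the paper's, it only yields ``does not decrease''.

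The proposal goes wrong where it tries to do more.

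\emph{The single-vertex reading of the anticode is not legitimate.} Theorem~3.1 rests on $\Delta$ being closed under deleting a vertex, whereas $\Delta^c$ is closed under adding one. Concretely, over $\mathbb F_2$ the anticode weight is $2^{k-1}-w(c_\Delta(u))$. So its minimum sits at maximizers of $w(c_\Delta(u))$, which typically have large support.

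\emph{The strict increase you flag as the remaining case is false.} Take $\Delta=\langle\{1,2\},\{3,4\}\rangle$ over $\mathbb F_2$. Each single-vertex anticode weight is $6$, yet the anticode is $[9,4,4]$. The anticode of $\partial(\Delta)$ is $[11,4,4]$. In both, the minimum is attained at $u=(1,1,1,1)$, which meets each $A_j$ in two vertices, so $l_{A_j}(u)=0$. Hence ``increases'' can only be meant weakly, and you should stop after the appended-columns step.

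\emph{Even that step has a hidden hypothesis.} The inference ``hence so is $d$'' tacitly needs $u\mapsto c_{\Delta^c}(u)$ to be injective, and the paper assumes this too. For $\Delta=\langle\{1,3,4,5\},\{2,3,4,5\}\rangle$, the anticode is the $[8,4,4]$ first-order Reed--Muller code, with $e_1+e_2$ in the kernel. After the columns $\chi_{A_1},\chi_{A_2}$ are appended, $e_1+e_2$ gives a codeword of weight $2$, so the minimum distance drops from $4$ to $2$.
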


\begin{proof}
    Let $u'$ be a vector supported on $\partial(\Delta)$. Its weight is given by
    \[
    w(u') = \#\left\{ \partial(\sigma) : \sigma \in \Delta \text{ and } \sum_{v \in \partial(\sigma)} u_v \not\equiv 0 \pmod q \right\},
    \]
    or equivalently,
    \[
    w(u') = \#\left\{ \sigma \in \Delta : \text{$\sigma$ is not maximal and } \sum_{v \in \sigma} u_v \not\equiv 0 \pmod q \right\}.
    \]
    
    When passing from $\Delta$ to $\partial(\Delta)$, the only contributions that may be lost correspond to maximal simplices. Since there are $s$ such simplices, the loss in weight is bounded by
    \[
    w(u) - w(u') \le s,
    \]
    which implies
    \[
    d(C_{\Delta^*}) \ge d(C_{\partial(\Delta)^*}).
    \]
    
    Finally, removing the $s$ maximal simplices reduces the length of the code from $n$ to $n - s$. The dimension remains unchanged provided that $\Delta$ has no isolated vertices.
    
    When passing from $\Delta$ to its boundary $\partial(\Delta)$, the maximal faces of $\Delta$ become elements of the complement $\Delta^c$. Consequently, the generator matrix of the associated anticode acquires $s$ new columns corresponding to these maximal faces. Since none of the original columns are removed, the weight of the anticode can only increase.
\end{proof}

\begin{lemma}
     Let $k>1$, and let $u\in\mathbb F_q^k$ be a nonzero vector.
    Then
    \[
    \#\{x\in\mathbb F_q^k:\langle u,x\rangle=0\}=q^{k-1},
    \]
    and consequently
    \[
    \#\{x\in\mathbb F_q^k:\langle u,x\rangle\neq 0\}=(q-1)q^{k-1}.
    \]    
\end{lemma}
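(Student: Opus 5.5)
The plan is to view the map $x\mapsto\langle u,x\rangle=\sum_{i=1}^k u_ix_i$ as an $\mathbb F_q$-linear functional $f_u:\mathbb F_q^k\to\mathbb F_q$. Then $\{x:\langle u,x\rangle=0\}$ is $\ker f_u$, and both counts follow from rank--nullity. The second count is the complement of the first.

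First, I will show that $f_u$ is surjective. Since $u\neq 0$, pick an index $j$ with $u_j\in\mathbb F_q^*$. For any $a\in\mathbb F_q$, the vector $x=a u_j^{-1}e_j$ satisfies $\langle u,x\rangle=a$. So the image of $f_u$ is all of $\mathbb F_q$, a space of dimension $1$.

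Second, rank--nullity gives $\dim\ker f_u=k-1$, hence $|\ker f_u|=q^{k-1}$. To avoid linear algebra, one can count directly instead. For each choice of the coordinates $x_i$ with $i\neq j$, which can be made in $q^{k-1}$ ways, the equation $u_jx_j=-\sum_{i\neq j}u_ix_i$ has exactly one solution $x_j$, because $u_j$ is invertible. This is the same count of $q^{k-1}$.

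Third, subtracting from the total $|\mathbb F_q^k|=q^k$ gives
\[
\#\{x:\langle u,x\rangle\neq0\}=q^k-q^{k-1}=(q-1)q^{k-1}.
\]
There is no real obstacle. The only point needing care is that the hypothesis $u\neq0$ is exactly what guarantees an invertible coordinate $u_j$. The assumption $k>1$ is not actually needed: the formula also holds for $k=1$.
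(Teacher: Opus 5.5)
Your proposal is correct and follows essentially the paper's argument: the paper also treats $x\mapsto\langle u,x\rangle$ as a nonzero linear functional whose kernel has codimension $1$, hence $q^{k-1}$ elements, and then takes the complement in $\mathbb F_q^k$. Your explicit surjectivity check and the direct coordinate count only spell out details the paper leaves implicit, and you are right that the hypothesis $k>1$ is not needed.
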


\begin{proof}
    Since $u\neq 0$, the map
    \[
    \varphi_u:\mathbb F_q^k\longrightarrow\mathbb F_q,
    \qquad
    x\longmapsto\langle u,x\rangle
    \]
    is a nonzero linear functional. Its kernel is therefore a subspace of codimension
    $1$, and hence has cardinality $q^{k-1}$. The result follows.
\end{proof}

\begin{theorem}
    Let $\{\Delta_k\}_{k> 1}$ be a family of simplicial complexes on vertex sets
    $[k]$, and assume that there exists $n_0\in\mathbb N$ such that
    \[
    \dim\Delta_k\le n_0
    \quad\text{for all }k.
    \]
    Let $ \Delta^c_k=\mathbb F_q^{k}\setminus\{\mathbf\chi_\sigma:\sigma\in\Delta_k\} $ denote the family of complementary simplicial complexes. Then, for every nonzero $u\in\mathbb F_q^{k}$,
    \[
    w(u) = \#\{x\in\Delta^c_k:\langle u,x\rangle\neq 0\} = (q-1)q^{k-1}+\mathcal O(k^{n_0+1}),
    \]
    where the implicit constant is independent of $u$. In particular,
    \[
    \frac{d(\Delta^c_k)}{|\Delta^c_k|}
    \longrightarrow
    \frac{q-1}{q}
    \qquad
    \text{as }k\to\infty.
    \]
    The relative minimum distance converges to $(q-1)/q$, which is the maximal possible value for the relative distance of a $q$-ary code.
\end{theorem}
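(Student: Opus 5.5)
The plan is to compute the weight of the anticode codeword by complementary counting. We split the count of all $x \in \mathbb F_q^k$ with $\langle u,x\rangle\neq 0$ into those $x$ lying in $\Delta^c_k$ and those that are characteristic vectors of faces of $\Delta_k$. Concretely,
\[
w(u)=\#\{x\in\mathbb F_q^k:\langle u,x\rangle\neq 0\}-\#\{\sigma\in\Delta_k:\langle u,\chi_\sigma\rangle\neq 0\}.
\]
By the preceding Lemma, the first term equals $(q-1)q^{k-1}$ exactly. This holds for every nonzero $u$, so the main term is uniform in $u$.

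The second term is controlled purely by the size of the complex. It satisfies
\[
0\le \#\{\sigma\in\Delta_k:\langle u,\chi_\sigma\rangle\neq 0\}\le |\Delta_k|.
\]
Since $\dim\Delta_k\le n_0$, every face has at most $n_0+1$ vertices. Hence
\[
|\Delta_k|\le\sum_{j=0}^{n_0+1}\binom{k}{j}\le (n_0+2)\,k^{n_0+1},
\]
and this bound does not involve $u$. This gives $w(u)=(q-1)q^{k-1}+\mathcal O(k^{n_0+1})$ with an implicit constant depending only on $n_0$. If the paper's convention omits the zero vector (the empty face), the count changes by at most one, which is absorbed in the error term.

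For the relative distance, $d(\Delta^c_k)=\min_{u\neq 0}w(u)$ inherits the same uniform estimate, because the error bound was independent of $u$. The length is $|\Delta^c_k|=q^k-|\Delta_k|=q^k+\mathcal O(k^{n_0+1})$. Dividing numerator and denominator by $q^k$ and using $k^{n_0+1}/q^k\to 0$ gives the limit $(q-1)/q$. The final remark, that $(q-1)/q$ is the largest possible relative distance, follows from the Plotkin bound: for a $q$-ary linear code of dimension $k \ge 1$, the average weight of a nonzero codeword is at most $\frac{q-1}{q}\cdot\frac{q^k}{q^k-1}\,n$, which tends to $\frac{q-1}{q}\,n$.

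The main obstacle is not computational. It lies in making the statement's objects precise: the coordinates of the anticode must be read as the vectors $x\in\Delta^c_k\subseteq\mathbb F_q^k$, with the weight given by the inner products $\langle u,x\rangle$, rather than as subsets of $[k]$. Once that interpretation is fixed, the uniformity of the $\mathcal O$ constant in $u$ is immediate, since the only $u$-dependent quantity is bounded by $|\Delta_k|$.
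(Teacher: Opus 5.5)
Your proposal is correct and follows the paper's argument: subtract the contribution of the removed set $\{\chi_\sigma:\sigma\in\Delta_k\}$ from the exact count $(q-1)q^{k-1}$ given by the lemma, bound that set by $|\Delta_k|=\mathcal O(k^{n_0+1})$ uniformly in $u$, and divide by the length. Your proposal also adds a few refinements the paper omits: the explicit constant $(n_0+2)k^{n_0+1}$, the length estimate $q^k+\mathcal O(k^{n_0+1})$ (the paper's $\mathcal O(k^{n_0})$ there is a slip), and the averaging/Plotkin justification of the final remark, which, as you note, holds only asymptotically in the dimension.
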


\begin{proof} 
    Fix $k$ and let $u\in\mathbb F_q^{k}$ be nonzero. By the previous lemma,
    \[
    \#\{x\in\mathbb F_q^{k}:\langle u,x\rangle\neq 0\}
    =
    (q-1)q^{k-1}.
    \]
    
    The anticode $\Delta^c_k$ is obtained from $\mathbb F_q^{k}$ by removing
    the set $E_k=\{\mathbf \chi_\sigma:\sigma\in\Delta_k\}$. Since $\dim\Delta_k\le n_0$, the number of simplices satisfies $|E_k|=|\Delta_k|=\mathcal O(k^{n_0+1})$.
    
    Removing the set $E_k$ can affect the weight by at most $|E_k|$, and hence
    \[ w(u) = (q-1)q^{k-1}-\#\{ x\in E_k: \langle x,u\rangle \neq 0\}= (q-1)q^{k-1} + \mathcal O(k^{n_0+1}).
    \]
    
    Since $|\Delta^c_k|=q^k-\mathcal O(k^{n_0})$, dividing by
    $|\Delta^c_k|$ yields
    \[
    \frac{d( \Delta^c_k)}{|\Delta^c_k|} \xrightarrow[k\to\infty]{} \frac{q-1}{q}.
    \]
\end{proof}

\begin{example}
    Let  $\Delta_0 = \langle \{1,2,3\},\{3,4,5\},\{1,3,5\}\rangle$ be a simplicial complex on the vertex set $\{1,2,3,4,5\}$. By performing suitable subdivisions of the simplices of $\Delta_0$, we obtain a refined triangulation
    \[
    \Delta = \langle 
    \{1,2,8\},\{1,3,8\},\{2,3,8\},
    \{3,4,6\},\{3,5,6\},\{4,5,6\},
    \{1,5,7\},\{1,3,7\},\{3,5,7\}
    \rangle
    \]
    on the vertex set $\{1,2,3,4,5,6,7,8\}$. 
    
    Consider the complementary set
    $\Delta^c \subset \mathbb{F}_3^{8}$, obtained as the complement of the set of
    characteristic vectors of the simplices of $\Delta$, and the corresponding
    linear code $C(\Delta^c)$.
    
    A direct computation shows that this code has parameters
    \[
    [6527,\,8,\,4344].
    \]
    In particular, the relative minimum distance satisfies
    \[
    \frac{4344}{6527} \approx 0.6655,
    \]
    which is very close to the asymptotic value $(q-1)/q = 2/3$ predicted by the previous theorem.
\end{example}

\begin{example}
      Continuing with the sets from the previous example, the associated anticode to $\Delta_0$ over $\mathbb{F}_2$ has parameters 
    \[
    [16,5,6],
    \]
    giving a distance-to-length ratio of $6/16 = 0.375$.
    
    Whereas the anticode over $\mathbb{F}_2$ associated with $\Delta$ has parameters 
    \[
    [222,8,107],
    \]
    with a distance-to-length ratio of approximately $107/222 \approx 0.482$, which is a significant improvement over the original complex.
\end{example}

\section{Construction of codes over $\mathbb F_2$ using geometric properties}

In this section, we focus on the construction of linear codes over the field
$\mathbb{F}_2$ using the geometric properties of simplicial complexes developed
in the previous sections. There are several reasons for choosing $\mathbb{F}_2$
as the base field.

First, as observed earlier, for this class of codes, the minimum distance is
independent of the size of the field. Moreover, working over $\mathbb{F}_2$
allows us to interpret the weight of a codeword from a geometric point of view, i.e. in
terms of intersections with simplices rather than sums modulo $q$. Finally,
from a computational point of view, constructions over $\mathbb{F}_2$ are
significantly less expensive and, therefore, more amenable to explicit
calculations and experimentation.

Let $\Delta$ be a simplicial complex on the vertex set $[N]$. For any vector
$u \in \mathbb{F}_2^k$, the weight of the corresponding codeword admits a
geometric interpretation as the number of simplices of $\Delta$ whose
intersection with the support of $u$ has odd cardinality. That is,
\begin{equation*}
    w(c_{\Delta^*}(u))=\# \left\{ \sigma \in \Delta: |supp(u) \cap \sigma| \equiv 1 \pmod 2 \right\}
\end{equation*}

We begin with the simplest possible example. Let $\Delta$ be a simplicial
complex consisting of a single maximal face $A_1$ with
$|A_1| = N+1$. Equivalently, $\Delta$ is the standard $N$--simplex.

From the results established above in Corollary \ref{Corollary 3.2}, it follows directly that the associated
code $C_{\Delta^*}$ has parameters
\[
[2^{N+1}-1,\, N,\, 2^{N}].
\]
In particular, all codes arising from simplicial complexes of this form are
optimal.

\begin{proposition}
    The linear code associated with the $(N-1)$--skeleton of an $N$--simplex has
    parameters
    \[
    [2^{N+1}-2,\, N+1,\, 2^{N}-1].
    \]
    In particular, these codes are length--optimal. 
\end{proposition}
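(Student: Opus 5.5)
Label the vertices of the $N$--simplex by $[N+1]$. The $(N-1)$--skeleton $\Delta$ then consists of all subsets of $[N+1]$ except the full set $[N+1]$. I will read off the three parameters directly and then use the Griesmer bound for length--optimality. I work over $\mathbb F_2$, with the parity description of weights given at the beginning of this section.

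\emph{Length and dimension.} The nonempty faces of $\Delta$ are all nonempty subsets of $[N+1]$ other than $[N+1]$ itself, so the length is $2^{N+1}-2$. Every singleton $\{v\}$ is a face, so the coordinate of $c_{\Delta^*}(u)$ indexed by $\{v\}$ equals $u_v$. Hence $u\mapsto c_{\Delta^*}(u)$ is injective on $\mathbb F_2^{N+1}$, and the dimension is $N+1$.

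\emph{Minimum distance.} Corollary \ref{Corollary 3.2.} would reduce this to single--vertex supports, where a vertex lies in $2^N-1$ faces of $\Delta$ (all $2^N$ subsets containing it, minus the full set). To avoid relying on that reduction, I will compute all weights directly. Let $S=\operatorname{supp}(u)$ with $|S|=m\ge 1$. Among all subsets $\sigma\subseteq[N+1]$, exactly half, namely $2^N$, satisfy $|\sigma\cap S|$ odd. This follows from the standard involution toggling a fixed element of $S$. The empty set never counts. The full set counts precisely when $m$ is odd, and it is removed from $\Delta$. Therefore
\[
w(c_{\Delta^*}(u))=\begin{cases}2^N-1, & m \text{ odd},\\ 2^N, & m \text{ even},\end{cases}
\]
so $d=2^N-1$, attained for example by single--vertex supports. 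This is consistent with Corollary \ref{Corollary 3.2.}.

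\emph{Length--optimality.} I would apply the Griesmer bound $n\ge\sum_{i=0}^{k-1}\lceil d/2^i\rceil$ with $k=N+1$ and $d=2^N-1$. For $i=0$ the term is $2^N-1$. For $1\le i\le N$ we have $\lceil (2^N-1)/2^i\rceil=2^{N-i}$, and these terms sum to $2^N-1$. The bound therefore equals $2^{N+1}-2$, which is exactly our length. So the code meets the Griesmer bound, and no binary linear code with dimension $N+1$ and minimum distance $2^N-1$ can be shorter.

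The only step needing care is the minimum distance. Arguing directly via the parity involution, rather than through Theorem \ref{Theorem 3.1}, is what makes the computation exact.
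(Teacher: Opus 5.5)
Your proof is correct, but it obtains the minimum distance by a different route from the paper.

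The paper never counts weights on the skeleton directly. It starts from the code of the full $N$--simplex, whose parameters come from Corollary~\ref{Corollary 3.2.}, and applies the boundary proposition of Section~4. That proposition says that deleting the $s$ maximal faces shortens the code by $s$, keeps the dimension (no isolated vertices), and lowers every weight by at most $s$. With $s=1$ this gives the length and $d\ge 2^N-1$. The paper then asserts that the distance drops by \emph{exactly} one. The matching upper bound is left implicit: a single vertex lies in only $2^N-1$ faces of the skeleton.

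Your parity-involution count is self-contained and independent of Theorem~\ref{Theorem 3.1} and the boundary proposition. It delivers both inequalities at once, and in fact the full weight distribution. The code is two-weight: a nonzero $u$ has weight $2^N-1$ when $|\operatorname{supp}(u)|$ is odd and $2^N$ when it is even. Equivalently, it is the binary simplex code of dimension $N+1$ with the coordinate indexed by the all-ones column deleted. You also verify the dimension directly through the singleton coordinates rather than inheriting it.

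The paper's route illustrates its general theme of controlling parameters through a topological operation. In general that operation only yields $d-s\le d'\le d$, so for this particular complex your direct computation is sharper. The Griesmer verification is the same in both.
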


\begin{proof}
    As shown in the previous section, passing to the boundary of a simplicial
    complex decreases the length of the associated code by the number of maximal
    faces of the original complex, and decreases the minimum distance by at most
    one per maximal face removed. Since an $N$--simplex has exactly one maximal
    face, both the length and the minimum distance decrease by exactly one.
    
    Therefore, the resulting code has parameters
    \[
    [2^{N+1}-2,\, N+1,\, 2^{N}-1].
    \]
    
    It remains to verify that these parameters attain the Griesmer bound.
    \begin{align*}
    \sum_{n=0}^{N} \left\lceil \frac{2^{N}-1}{2^{n}} \right\rceil
    &= 2^{N}-1 + \sum_{n=1}^{N} \left\lceil \frac{2^{N}-1}{2^{n}} \right\rceil \\
    &= 2^{N}-1 + \sum_{n=1}^{N} \left\lceil \frac{2^{N}}{2^{n}}- \frac{1}{2^{n}}  \right\rceil  \\
    &= 2^{N}-1 + \sum_{n=1}^{N} \frac{2^{N}}{2^{n}}  = 2^{N}-1 +2^{N}-1=2^{N+1}-2.
    \end{align*}
    
    This shows that the code meets the Griesmer bound and is therefore optimal.
\end{proof}

\begin{theorem}
    An infinite family of optimal linear codes can be constructed by taking the cone over the $(N\!-\!1)$--skeleton of an $N$--simplex. These codes have parameters
    \[
    [\,2(2^{N+1}-2)+1,\; N+2,\; 2(2^{N}-1)].
    \]
\end{theorem}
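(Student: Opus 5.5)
The plan is to compute the three parameters directly from the earlier results and then prove optimality with the Griesmer bound. I expect the optimality notion to need some care.

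\emph{Length and dimension.} Let $\Delta$ be the $(N-1)$--skeleton of the $N$--simplex on the vertex set $[N+1]$. It has $2^{N+1}-2$ non-empty faces. The non-empty faces of $cone(\Delta)$ with apex $c$ are of two kinds: the non-empty faces $\sigma\in\Delta$, and the faces $\sigma\cup\{c\}$ for every $\sigma\in\Delta$, the empty face included. Counting these gives length $2(2^{N+1}-2)+1$, in agreement with the Corollary following Proposition \ref{Proposition 4.3}. The cone has $N+2$ vertices and no isolated vertices, so the dimension is $N+2$. This takes the number of generator rows to be the number of vertices, as in the counting convention of the paper's earlier Example.

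\emph{Minimum distance.} By Corollary \ref{Corollary 3.2.} it suffices to check vectors $u$ supported on a single vertex. Over $\mathbb F_2$, the weight of such a $u$ is the number of faces that contain its vertex.
\begin{itemize}
\item If the vertex is $c$, the faces containing it are the faces $\sigma\cup\{c\}$ with $\sigma\in\Delta$, so the weight is $|\Delta|=2^{N+1}-1$.
\item If the vertex is $v\neq c$, the faces of $\Delta$ containing $v$ are the non-empty proper-or-not subsets of $[N+1]$ containing $v$, minus the top face. There are $2^N-1$ of them. Each appears both with and without $c$, so the weight is $2(2^N-1)$. This is the first case of Proposition \ref{Proposition 4.3}.
\end{itemize}
The minimum is therefore $d=2(2^N-1)=2^{N+1}-2$.

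\emph{Optimality.} This is the step I expect to be the main obstacle. For $k=N+2$ and $d=2^{N+1}-2$, the Griesmer sum is
\[
(2^{N+1}-2)+(2^N-1)+\sum_{i=2}^{N+1}2^{N+1-i}=2^{N+2}-4 .
\]
This is one less than the length $2^{N+2}-3$, so the family is not Griesmer length-optimal. Indeed, a Solomon--Stiffler code (a simplex code with a line removed) attains $[2^{N+2}-4,N+2,2^{N+1}-2]$. I would therefore prove \emph{distance}-optimality instead: no binary $[2^{N+2}-3,N+2,d']$ code exists with $d'\ge 2^{N+1}-1$. For $d'=2^{N+1}-1$ we have $\lceil(2^{N+1}-1)/2^i\rceil=2^{N+1-i}$ for every $i\ge1$, so the Griesmer bound gives
\[
(2^{N+1}-1)+(2^{N+1}-1)=2^{N+2}-2>2^{N+2}-3 .
\]
Larger $d'$ only increase the Griesmer sum, so they are excluded as well. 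Hence $d=2^{N+1}-2$ is the largest minimum distance possible for this length and dimension, and the cone codes form an infinite family of distance-optimal codes. I would state explicitly in the write-up that ``optimal'' is meant in this sense, since the Griesmer sum shows the length is one more than the Griesmer bound.
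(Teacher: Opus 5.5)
Your proof is correct. The optimality part is the same as the paper's: both compute the Griesmer sum $2^{N+2}-4$, note that the length exceeds it by exactly one, and then exclude minimum distance $2^{N+1}-1$ via the Griesmer bound, so ``optimal'' means distance-optimal.

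You differ from the paper in how you obtain the parameters. The paper chains earlier results. It takes $[2^{N+1}-2,N+1,2^N-1]$ for the skeleton from the preceding proposition, which is itself derived from the boundary operation. It then applies the cone rules $n'=2n+1$, $k'=k+1$, $d'=2d$ from Proposition~\ref{Proposition 4.3} and its corollary. You instead apply Corollary~\ref{Corollary 3.2.} directly to the cone and count the faces through each vertex. This gives the exact single-vertex weights: $2^{N+1}-1$ at the apex and $2(2^N-1)$ elsewhere.

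Your route is self-contained and sharper at the apex. There, Proposition~\ref{Proposition 4.3} only supplies the lower bound $|\Delta|-1=2^{N+1}-2$. That happens to equal $2d$, so the paper's doubling step is valid here, but with no room to spare. Your route also bypasses the step in the skeleton proposition where ``decreases by at most one'' is upgraded to ``exactly one'' without exhibiting the minimizing vector. The paper's route, in turn, showcases the section's theme that geometric operations change the parameters predictably.

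Your Solomon--Stiffler remark is a genuine addition. The paper only records the Griesmer excess, whereas the $[2^{N+2}-4,N+2,2^{N+1}-2]$ code shows the family is definitely not length-optimal.

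A minor wording point: the dimension is $N+2$ because the singleton faces contribute the unit columns $e_v$ (equivalently, because every nonzero $u$ has positive weight). It is not because there are no isolated vertices.
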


\begin{proof}
    From the previous result, the linear code associated with the $(N\!-\!1)$--skeleton of an $N$--simplex has parameters
    \[
    [\,2^{N+1}-2,\; N+1,\; 2^{N}-1\,].
    \]
    Taking the cone over this simplicial complex modifies the parameters in a controlled way, according to the geometric results established in Proposition \ref{Proposition 4.3}. In particular, the length is doubled and increased by one, the dimension increases by one, and the distance is doubled, yielding a new code with parameters
    \[
    [\,2(2^{N+1}-2)+1,\; N+2,\; 2(2^{N}-1)\,].
    \]
    
    We now analyze the Griesmer bound. A direct computation shows that
    \begin{align*}
    \sum_{n=0}^{N+1} \left\lceil \frac{2(2^{N}-1)}{2^{n}} \right\rceil
    &= 2(2^{N}-1) + (2^{N}-1)
       + \sum_{n=1}^{N} \left\lceil \frac{2^{N}-1}{2^{n}} \right\rceil \\
    &= 2(2^{N}-1) + (2^{N}-1)
       + \sum_{n=1}^{N} \frac{2^{N}}{2^{n}} \\
    &= 2(2^{N}-1) + (2^{N}-1) + (2^{N}-1) \\
    &= 2(2^{N+1}-2).
    \end{align*}
    Therefore, the length of the code exceeds the Griesmer bound by exactly one unit.
    
    However, no code with the same length and dimension can have strictly larger minimum distance. Indeed, assume that there exists a code with parameters $[\,2(2^{N+1}-2)+1,\; N+2,\; 2(2^{N}-1)+1\,]$.
    Then
    \begin{align*}
    \sum_{n=0}^{N+1} \left\lceil \frac{2(2^{N}-1)+1}{2^{n}} \right\rceil
    &= 2(2^{N}-1)+1 + (2^{N}-1)+1 
       + \sum_{n=1}^{N} \left\lceil \frac{2^{N}-1}{2^{n}} \right\rceil \geq  \\
    &2 + \sum_{n=0}^{N+1} \left\lceil \frac{2(2^{N}-1)}{2^{n}} \right\rceil >\\
    & 2(2^{N+1}-2)+1,
    \end{align*}
    which contradicts the Griesmer bound. Hence, no code with these parameters can have larger minimum distance, and the constructed code is distance–optimal.
\end{proof}

\section*{Acknowledgments}
\noindent This research is supported by PID 2022-140934OB-I00 funded by MI-CIU/AEI/10.13039/501100011033 and ``Junta de Andalucía FQM-425''. The second author was partially supported by grant PID2022-138906NB-C21 funded by MI-CIU/AEI/10.13039/501100011033 and by ERDF A way of making Europe.

\bibliographystyle{plain}   
\nocite{*}
\bibliography{refences}

\end{document}